\newcommand{\pr}{{\bf Pr}}
\newcommand{\etal}{{\it et al.}}
\begin{document}
 
\title{A bad 2-dimensional instance for k-means++}
%
%
\author{Ragesh Jaiswal \and Prachi Jain \and Saumya Yadav}
\authorrunning{Jaiswal et al.} 
%
%
\institute{Department of Computer Science and Engineering, \\
Indian Institute of Technology Delhi, New Delhi, India.
}

\maketitle     

\begin{abstract}
The k-means++ seeding algorithm is one of the most popular algorithms that is used for finding the initial $k$ centers when using the k-means heuristic. The algorithm is a simple sampling procedure and can be described as follows: 
\begin{quote}
Pick the first center randomly from among the given points. 
For $i > 1$, pick a point to be the $i^{th}$ center with probability proportional to the square of the Euclidean distance of this point to the previously $(i-1)$ chosen centers.
\end{quote}
The k-means++ seeding algorithm is not only simple and fast but gives an $O(\log{k})$ approximation in expectation as shown by Arthur and Vassilvitskii~\cite{av07}.
There are datasets~\cite{av07,adk09} on which this seeding algorithm gives an approximation factor $\Omega(\log{k})$ in expectation. 
However, it is not clear from these results if the algorithm achieves good approximation factor with reasonably large probability (say $1/poly(k)$). 
Brunsch and R\"{o}glin~\cite{br11} gave a dataset where the k-means++ seeding algorithm achieves an approximation ratio of $(2/3 - \epsilon)\cdot \log{k}$ only with probability that is exponentially small in $k$. 
However, this and all other known {\em lower-bound examples} \cite{av07,adk09} are high dimensional. 
So, an open problem is to understand the behavior of the 
algorithm on low dimensional datasets. 
In this work, we give a simple two dimensional dataset on which the seeding algorithm achieves an approximation ratio $c$ (for some universal constant $c$) only with probability exponentially small in $k$. 
This is the first step towards solving open problems posed by Mahajan \etal~\cite{mnv12} and by Brunsch and R\"{o}glin~\cite{br11}.
\end{abstract}

\section{Introduction}

The k-means clustering problem is one of the most important problems in Data Mining and Machine Learning that has been widely studied. The problem is defined as follows:
\begin{quote}
{\bf (k-means problem)}: Given a set of $n$ points $X = \{x_1, ..., x_n\}$ in a $d$-dimensional space,
find a set of $k$ points $C = \{c_1, ..., c_k\}$ such that the cost function $\phi_{C}(X) = \sum_{x \in X} \min_{c \in C} D(x, c)$ is minimized. Here $D(x,c)$ denotes the square of the Euclidean distance between points $x$ and $c$.
In the {\em discrete} version of this problem the centers are constrained to be a subset of the given points $X$.
\end{quote}

The problem is known to be NP-hard even for small values of the parameters such as when $k=2$~\cite{d07} and when 
$d=2$~\cite{v09,mnv12}.
There are various approximation algorithm for solving the problem.
However, in practice, a heuristic known as the k-means algorithm (also known as Lloyd's algorithm) is used because of its excellent performance on real datasets even though it does not given any performance guarantees. 
This algorithm is simple and can be described as follows: 
\begin{quote}
{\bf (k-means Algorithm)}: (i) Arbitrarily, pick $k$ points $C$ as centers. (ii) Cluster the given points based on the nearest distance of points to centers in $C$. (iii) For all clusters, find the mean of all points within a cluster and replace the corresponding member of $C$ with this mean. Repeat steps (ii) and (iii) until convergence.
\end{quote}

Even though the above algorithm performs very well on real datasets, it does not have any performance guarantees. 
This means that this {\em local search} algorithm may either converge to a local optimum solution or may take a large amount of time to converge~\cite{av06a,av06b}. Poor choice of the initial $k$ centers (step (i)) is one of the main reasons for its bad performance with respect to approximation factor. 
A number of {\em seeding} heuristics have been suggested for picking the initial centers. 
One such seeding algorithm that has become popular is the k-means++ seeding algorithm.
The algorithm is extremely simple and runs very fast in practice.
Moreover, this simple randomized algorithm also gives an approximation factor of $O(\log{k})$ in expectation~\cite{av07}.
In practice, this seeding technique is used for find the initial $k$ centers to be used with the k-means algorithm and this guarantees a theoretical approximation guarantee.
The simplicity of the algorithm can be seen by its simple description below:
\begin{quote}
{\bf (k-means++ seeding)}: Pick the first center randomly from among the given points. Pick a point to be the $i^{th}$ center ($i>1$) with probability proportional to the square of the Euclidean distance of this point to the previously $(i-1)$ chosen centers.
\end{quote}

A lot of recent work has been done in understanding the power of this simple sampling based approach for clustering.
We discuss these in the following paragraph.

\subsection{Related work} 
Arthur and Vassilvitskii~\cite{av07} showed that the sampling algorithm gives an approximation guarantee of $O(\log{k})$ in expectation. They also give an example dataset on which this approximation guarantee is best possible. Ailon \etal~\cite{ajm09} and Aggarwal \etal~\cite{adk09} show that sampling more than $k$ centers in the manner described above gives a constant pseudo-approximation. 
Ackermann and Bl\"{o}mer~\cite{ab10} showed that the results of Arthur and Vassilvitskii~\cite{av07} may be extended to a large class of other distance measures. Jaiswal \etal~\cite{jks12} showed that the seeding algorithm may be modified appropriately to give a $(1 + \epsilon)$-approximation algorithm for the k-means problem. 
Jaiswal and Garg~\cite{jg12} and Agarwal \etal~\cite{ajp13} showed that if the dataset satisfies certain separation conditions, then the seeding algorithm gives constant approximation with probability $\Omega(1/k)$. 
Bahmani \etal~\cite{b12} showed that the seeding algorithm performs well even when fewer than $k$ sampling iterations are executed provided that more than one center is chosen in a sampling iteration.
We now discuss our main results.

\subsection{Main Results}
The lower-bound examples of Arthur and Vassilvitskii~\cite{av07} and Aggarwal \etal~\cite{adk09} have the following two properties: (a) the examples are high dimensional and (b) the examples lower-bound the {\em expected} approximation factor. 
Brunsch and R\"{o}glin~\cite{br11} discussed whether the k-means++ seeding gives better than $O(\log{k})$ approximation with probability $\Omega(1/poly(k))$. 
They constructed a high dimensional example where this is not true and a $O(\log{k})$ approximation is achieved only with probability exponentially small in $k$.
An important open problem mentioned in their work is to understand the behavior of the seeding algorithm on low-dimensional examples. 
This problem is also mentioned as an open problem by Mahajan \etal~\cite{mnv12} who examined the hardness of the k-means problem on $2$-dimensional datasets. 
In this work, we construct a two dimensional dataset on which the k-means++ seeding algorithm achieves an approximation ratio of $c$ (for some universal constant $c$) with probability exponentially small in $k$. 
Following is the main theorem that we prove in this work.

\begin{theorem}[Main Theorem]
Let $k\geq 10^3$ and $\eta = 0.999$. Consider the discrete version of the k-means problem.
There exists a two dimensional dataset $X$ such that the probability that the k-means++ algorithm 
gives an approximation factor better than $(9-\eta)/8$ on $X$ is at most $(2\sqrt{k}) \cdot 2^{-k/300}$.
\end{theorem}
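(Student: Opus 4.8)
The plan is to take $X$ to be a union of $k$ far-apart \emph{gadgets} placed (essentially) along a line in the plane, where each gadget is a minuscule dumbbell: a ``heavy'' lobe consisting of $\alpha$ copies of one point and a ``light'' lobe consisting of $\beta<\alpha$ copies of another point at distance $2\delta$ from it, with the segment joining the two lobes oriented \emph{perpendicular} to the line along which the gadgets are strung. Any two distinct gadgets are placed at distance $R\gg\delta$ from one another, with the ratio $R/\delta$ fixed only at the end, large enough for the estimates below; the ratio $\beta/\alpha$ is a fixed constant tuned so that a ``typical'' run of k-means++ produces approximation ratio close to $9/8$ (one can take $\tfrac{\alpha-\beta}{\alpha+\beta}=\tfrac18$, making the typical ratio $9/8$). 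The discrete optimum is transparent: with a budget of $k$ centers one must put exactly one center in each gadget, and within a gadget the heavy lobe is the cheaper choice, so up to the negligible contribution of inter-gadget distances $\phi_{\mathrm{OPT}}=k\cdot\beta(2\delta)^2$.

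First I would prove a structural lemma describing every clustering of cost below $\tfrac{9-\eta}{8}\,\phi_{\mathrm{OPT}}$. If some gadget receives no center, its $\alpha+\beta$ points must be charged their squared distance to a \emph{different} gadget, contributing $\Omega(R^2)$ and hence a ratio that is not a constant at all; so ratio $<\tfrac{9-\eta}{8}$ forces each of the $k$ gadgets to contain exactly one center. Conditioned on that, writing $J$ for the set of gadgets whose single center sits on the light lobe, the total cost equals $\bigl(k-|J|\bigr)\beta(2\delta)^2+|J|\,\alpha(2\delta)^2$, so the approximation ratio is exactly $1+\frac{|J|(\alpha-\beta)}{k\beta}$. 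Hence ratio $<\tfrac{9-\eta}{8}$ is possible only when $|J|<\tau k$, where $\tau:=\frac{(1-\eta)\beta}{8(\alpha-\beta)}$ is a tiny ($=\Theta(1-\eta)$) linear-in-$k$ threshold.

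The core of the proof is the analysis of the k-means++ sampling, for which I would argue three things. (i) An uncovered gadget contributes about $(\alpha+\beta)R^2$ to the current $D^2$-potential while a gadget already holding a center contributes only $O(\delta^2)$, so at each of the $k$ steps the chance that the new center lands in an \emph{already-covered} gadget rather than opening a fresh one is $O(k\delta^2/R^2)$; a union bound over the $k$ steps makes the event ``at every step a brand-new gadget is opened'' fail with probability at most $2^{-k}$ once $R/\delta$ is taken large. (ii) Given that the step opens gadget $g$, the new center is one of $g$'s points sampled proportionally to its squared distance to the nearest already-chosen center; since every already-chosen center lies in a different gadget and the dumbbell of $g$ is perpendicular to the line, the two lobes of $g$ are equidistant from any such center up to relative error $O(\delta^2/R^2)$, so the light lobe is selected with probability $\tfrac{\beta}{\alpha+\beta}\bigl(1\pm O(\delta^2/R^2)\bigr)$, essentially independently of the history (the very first center, being uniform over all points, hits the light lobe with probability exactly $\tfrac{\beta}{\alpha+\beta}$). (iii) Therefore $|J|$ can be coupled to a $\mathrm{Binomial}\bigl(k,\tfrac{\beta}{\alpha+\beta}-o(1)\bigr)$ variable, whose mean $\Theta(k)$ dwarfs $\tau k$; a Chernoff / Poisson lower-tail estimate then yields $\Pr[\,|J|<\tau k\,]\le 2\sqrt{k}\cdot 2^{-k/300}$, which together with the $2^{-k}$ term from (i) gives the theorem.

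The main obstacle I expect is making (i)--(iii) rigorous in the presence of k-means++'s adaptivity: the sampling law at step $t$ depends on all $t-1$ prior choices, so the clean statement ``each gadget goes to the light lobe independently with probability $\tfrac{\beta}{\alpha+\beta}$'' is only available after conditioning on the favorable ``all gadgets opened fresh'' event and after showing that the $O(\delta^2/R^2)$ per-step biases do not accumulate into something comparable to $\tau k$ over $k$ steps. Carrying this out -- fixing $R/\delta$ as a function of $k$, coupling $|J|$ with an honest binomial, and preventing the conditioning in (i) from spoiling the independence in (ii) -- is where the genuine work lies, and is presumably the source of the extra $2\sqrt{k}$ factor and the unassuming constant $300$. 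A minor but necessary side task is the exact calibration of $\beta/\alpha$, together with careful bookkeeping of lower-order terms, so that the light-lobe penalty produces precisely the bound $(9-\eta)/8$ rather than some other constant.
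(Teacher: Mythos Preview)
Your construction is correct and would prove the theorem, but it is \emph{not} the paper's construction. The paper builds $k$ \emph{symmetric} two-point clusters of exponentially decaying mass and exponentially growing diameter (cluster $A_i$ has $m/4^{i-1}$ points at each of two locations a distance $2^{i-1}r$ apart, so every cluster contributes exactly $mr^2$ to the discrete optimum). The failure mode the paper exploits is not ``wrong lobe chosen'' but ``both lobes of some cluster chosen'': once one of the leftmost clusters is hit (Lemma~\ref{lemma-1}), every remaining uncovered cluster has potential between $mr^2$ and $5mr^2$, so picking centers is essentially like drawing $k$ balls from $2k$ balls in $k$ colour-pairs with a bounded bias toward new colours; a Stirling computation (the \textbf{BiasedSampBall} lemma) then shows that seeing more than $0.99\cdot(0.9k)$ distinct colours has probability $\le \sqrt{k}\,2^{-k/64}$, and an uncovered cluster costs $(17/8)mr^2$, giving the ratio $(9-\eta)/8$.

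Your route---identical far-apart asymmetric dumbbells, with the perpendicular orientation making the heavy/light choice a near-fair $\beta/(\alpha+\beta)$ coin at every fresh opening, and then a Chernoff lower-tail bound on $|J|$---is arguably cleaner: it avoids the combinatorial ball-sampling lemma entirely and yields a much stronger exponent (order $e^{-ck}$ with $c\approx 0.5$ rather than $2^{-k/300}$). The paper's construction, on the other hand, does not need the inter-gadget separation $R/\delta$ to be taken exponentially large in $k$ just to force one-center-per-gadget, and its symmetric clusters make the ``which lobe'' question irrelevant; the price is the delicate exponential scaling of masses and diameters and the bespoke sampling lemma. Both constructions are two-dimensional and both hit the target constant $(9-\eta)/8$ (yours by calibrating $(\alpha-\beta)/(\alpha+\beta)=1/8$, theirs from the geometry giving $17/8$ per uncovered cluster), so either would establish the theorem; they simply attribute the badness of k-means++ to different mechanisms.
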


For the non-discrete version, we get the above statement with approximation factor $\left(\frac{9-\eta}{4} \right)$.

\subsection{Our techniques}
All the known lower-bound examples~\cite{av07,adk09,br11} have the following general properties:
\begin{enumerate}
\item[(a)] All optimal clusters have equal number of points.
\item[(b)] The optimal clusters are high dimensional simplices.
\end{enumerate}
In order to construct a counterexample for the two dimensional case, we consider datasets that have different number of points in different optimal clusters. 
Our counterexample is shown in Figure~\ref{fig:example}.
Note that the optimal clusters are points at the end of the vertical bars
and the cluster sizes decreases exponentially going from left to right.
We say that the seeding algorithm {\em covers} the $i^{th}$ optimal clusters if the algorithm picks a point from either of the ends of the $i^{th}$ vertical bar (recall that this is the $i^{th}$ optimal cluster). 
The proof follows from the following three observations about this dataset:
\begin{itemize}
\item {\bf Observation 1}: Once the $i^{th}$ cluster gets covered, the probability of covering any cluster $j$ in subsequent rounds is roughly the same for any $j > i$. Moreover, there is a good chance that after the first few iterations, a cluster $i$ for some small $i$ gets covered.

\item {\bf Observation 2}: The algorithm needs to cover more than some constant fraction of clusters to achieve good (another constant) approximation.

\item {\bf Observation 3}: Given that a small numbered cluster (clusters are numbered from left to right) is covered in the initial few iterations, the probability of covering more than certain constant fraction of clusters is exponentially small in $k$.
\end{itemize}

In the next section, we give the details of this proof.

\section{Proof of the Main Theorem}\label{sec:2}

\begin{figure}[h]
  \centering
    \includegraphics[scale=0.5]{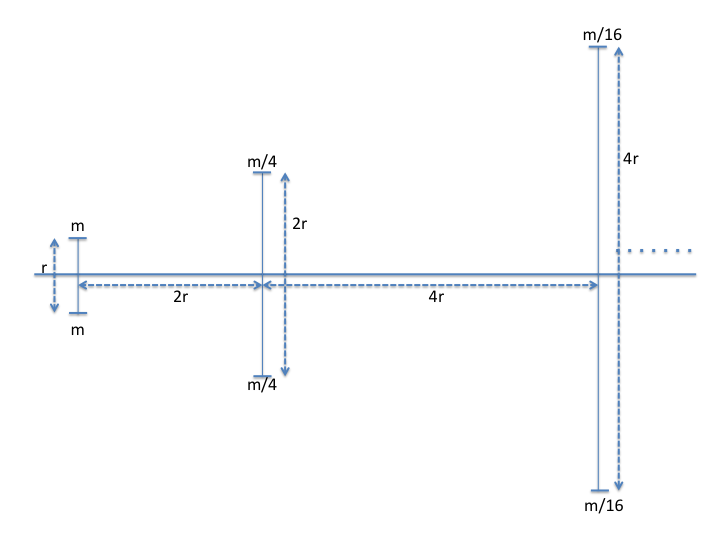}
    \vspace{-10pt}
    \caption{The 2-dimensional instance on which the k-means++ algorithms behaves badly.}
    \label{fig:example}
\end{figure}

The figure means that there are $m$ points each located at $(0, r/2)$ and $(0, - r/2)$. 
There are $m/4$ points each located at $(2r, r)$ and $(2r, -r)$ and so on. 
Here are some simple observations regarding this example:
The total number of points is $2 \cdot \sum_{i=1}^{k} \frac{m}{2^{2(i-1)}}$.
Note that the optimal cost in the discrete version of the problem is $k m r^2$.
Moreover, this is when one center from each of the $k$ clusters (vertical bars) is chosen.
We denote the optimal clusters by $A_1, ..., A_k$ from left to right. 
Note that the number of points in these clusters drops exponentially. 
We say that an optimal cluster $A_j$ is {\em covered} by the k-means++ algorithm if the algorithm picks a point as
a center from $A_j$.

Since the number of points in the initial few clusters are large, there is a good chance 
that in the initial few iterations of the k-means++ seeding algorithm, one center from 
these initial few clusters are chosen. 
The next lemma shows this more formally. 

\begin{lemma}\label{lemma-1}
Let $0 < \alpha, \beta \leq 1$and let $L = \{A_1, ..., A_{\beta k}\}$. Let $C = \{c_1, ...,c_{\alpha k}\}$ be the set of centers chosen by the k-means++ algorithm in the first $\alpha k$ iterations. Then
\[
\pr[C \cap (A_1 \cup ... \cup A_{\beta k}) = \phi] \leq e^{-(\alpha \beta/3) k}
\]
\end{lemma}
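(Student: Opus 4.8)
The plan is to bound the probability that \emph{all} of the first $\alpha k$ centers avoid the ``large'' clusters $A_1,\dots,A_{\beta k}$ by analyzing the k-means++ sampling step by step and showing that, conditioned on any history, the next center lands in $L=\bigcup_{j\le \beta k} A_j$ with probability bounded below by a constant (depending on $\beta$). Since each of the $A_1,\dots,A_{\beta k}$ contains many points — cluster $A_j$ has $2m/2^{2(j-1)}$ points, all at squared distance $r^2$ from the optimal center — while the total number of points is $2\sum_{i=1}^k m/2^{2(i-1)} < \frac{8m}{3}$, the fraction of points (by count) lying in $L$ is bounded below by a constant; the first center, chosen uniformly at random, therefore lands in $L$ with constant probability. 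For subsequent centers the sampling is weighted by squared distance to the current center set, so I need a geometric argument that the points of $L$ cannot have collectively negligible weight.

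The key step is the lower bound on $\pr[c_{t+1}\in L \mid c_1,\dots,c_t \notin L]$, or more robustly on $\pr[c_{t+1}\in L \mid c_1,\dots,c_t]$ for an arbitrary choice of $c_1,\dots,c_t$. First I would observe that because the instance is laid out along the $x$-axis with the $j$-th bar at $x=2r(j-1)$ (spacing growing, heights $r/2^{j-1}$ shrinking), any point $p$ not in $L$ lies at $x$-coordinate at least $2r\beta k$ roughly, so its squared distance to any point of, say, $A_1$ (at $x=0$) is comparable to or larger than its own contribution could be — more carefully, I would argue that for \emph{every} point $p$ in the dataset, $D(p, c_{\mathrm{near}(p)}) $ is at most a constant times $\max_{q\in A_1} D(q, c_{\mathrm{near}(p)})$ plus the ``internal'' cost, and hence $\sum_{q\in L} D(q,C_t) \ge (\text{const})\cdot \sum_{p} D(p,C_t)$ whenever $L\cap C_t=\phi$. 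This gives $\pr[c_{t+1}\in L]\ge \delta$ for a universal-in-$\beta$ constant $\delta = \delta(\beta)$. The cleanest route is probably to lower-bound using only $A_{\beta k}$'s points together with the fact that it has at least a $2^{-2\beta k}$ fraction of all points — but that decays, so I expect the better bound comes from using the \emph{nearest} large cluster to each far point, exploiting the linear geometry. Getting the constant to be exactly such that $(1-\delta)^{\alpha k} \le e^{-(\alpha\beta/3)k}$, i.e. $\delta \ge \beta/3$ (using $1-\delta\le e^{-\delta}$), is the arithmetic target.

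With such a $\delta\ge \beta/3$ in hand, the conclusion is immediate: writing $E_t$ for the event $\{c_1,\dots,c_t\}\cap L=\phi$, we have $\pr[E_{t+1}\mid E_t] = 1-\pr[c_{t+1}\in L\mid E_t]\le 1-\beta/3$, so
\[
\pr[C\cap L=\phi]=\pr[E_{\alpha k}]=\prod_{t=0}^{\alpha k-1}\pr[E_{t+1}\mid E_t]\le (1-\beta/3)^{\alpha k}\le e^{-(\alpha\beta/3)k},
\]
using $1-x\le e^{-x}$. The main obstacle is the geometric estimate in the middle: establishing that the large clusters carry a constant fraction of the total squared-distance potential regardless of where the first $t$ centers sit. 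I expect this to hinge on the specific coordinates in Figure~\ref{fig:example} — the factor-$2$ growth in horizontal spacing against the factor-$2$ decay in bar height — so that a point in a far bar is never much closer to its nearest chosen center than a point of $A_1$ (or of the nearest uncovered large cluster) would be; once that domination is set up, everything else is bookkeeping with the geometric series $\sum 2^{-2(i-1)}$.
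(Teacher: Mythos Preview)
Your overall framework---define $E_t=\{c_1,\dots,c_t\}\cap L=\phi$, show $\pr[E_{t+1}\mid E_t]\le 1-\beta/3$, and multiply---is exactly what the paper does. The gap is the part you yourself flag as ``the main obstacle'': the geometric lower bound $\pr[c_{t+1}\in L\mid E_t]\ge \beta/3$. Your proposed pointwise domination (each point's potential bounded by a constant times $A_1$'s potential) does not hold: a point in a far-right cluster $A_\ell$ can be at squared distance $\Theta(2^{2\ell}r^2)$ from the nearest center while $A_1$'s points are only at distance $\Theta(2^{2j}r^2)$, with $\ell\gg j$. What saves you is the \emph{cluster sizes}, not a pointwise comparison. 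Also, your geometry is off in places: the horizontal spacing between consecutive bars doubles (the $i$-th gap is $2^{i-1}r$, so the bars are not at $x=2r(j-1)$), and the bar heights \emph{grow} as $2^{i-2}r$ rather than decay.

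The paper's device is clean and worth knowing. Conditioned on $E_1,\dots,E_{i-1}$, let $j>\beta k$ be the index of the \emph{leftmost} covered cluster, and split the clusters into $L=\{A_1,\dots,A_{\beta k}\}$, $M=\{A_{\beta k+1},\dots,A_{j-1}\}$, and $R=\{A_j,\dots,A_k\}$. Three estimates then do all the work: (i) $\phi_{C_{i-1}}(R)\le 5kmr^2$, since every cluster in $R$ is either covered or adjacent to a covered one and each contributes $O(mr^2)$ (the $2^{-2\ell}$ in the point count cancels the $2^{2\ell}$ in the squared distances); (ii) $\phi_{C_{i-1}}(M)\le \phi_{C_{i-1}}(L)$, because $M$ sits strictly between $L$ and all chosen centers, so each $M$-cluster's potential is dominated term-by-term by the corresponding $L$-cluster's; (iii) $\phi_{C_{i-1}}(L)\ge 10\cdot 2^{2\beta k}mr^2$, coming essentially from $A_1$'s $2m$ points all at squared distance $\gtrsim 2^{2\beta k}r^2$ from the nearest center (which lies in $A_j$, $j>\beta k$). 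Plugging these into $\pr[\neg E_i\mid E_1,\dots,E_{i-1}]=\phi(L)/(\phi(L)+\phi(M)+\phi(R))\ge 1/(2+\tfrac{1}{2}k\,2^{-2\beta k})\ge \beta/3$ gives exactly the constant you need. Your sketch was heading toward something like (i) with the ``plus internal cost'' remark, but the $L$--$M$--$R$ partition anchored at the leftmost covered cluster is the missing organizing idea.
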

\begin{proof}
For $1 \leq i \leq \alpha k$, let $E_i$ denote the event that $c_i$ does not cover any cluster in $L$. We make the following observations: 
\[
\pr[E_1] = 1 - \frac{m + m/2^2 + ... + m/2^{2 \beta k - 2}}{m + m/2^2 +  ... + m/2^{2k-2}} = 1 - \frac{1 - 1/2^{2\beta k}}{1 - 1/2^{2k}} < 1/2^{2\beta k}
\]
We can also prove the following simple lemma.
\begin{lemma}
$\forall i > 1, \pr[E_i | E_1,...,E_{i-1}] \leq (1 - \beta/3)$
\end{lemma}
\begin{proof}
Consider centers $C_{i-1} = \{c_1, c_2, ..., c_{i-1}\}$. 
Let $j$ be the smallest integer such that $C_{i-1}$ covers $A_j$. Conditioned on the event $E_1, ..., E_{i-1}$, we have that $j > \beta k$. Let us partition the optimal clusters into the following 3 parts: the first partition is $L$, the second partition is $M = \{A_{\beta k + 1}, ..., A_{j-1}\}$ and the third partition is $R = \{A_{j}, ..., A_{k}\}$. We note that
\[
\pr[\neg E_i | E_1, ..., E_{i-1}] = \frac{\phi_{C_{i-1}}(L)}{\phi_{C_{i-1}}(L) + \phi_{C_{i-1}}(M) + \phi_{C_{i-1}}(R)}
\]
Note that $\phi_{C_{i-1}}(R) \leq 5 (k - \beta k - i)mr^2 \leq 5 kmr^2$, $\phi_{C_{i-1}}(M) \leq \phi_{C_{i-1}}(L)$, and $\phi_{C_{i-1}}(L) \geq 10 \cdot 2^{2 \beta k} mr^2$.
Using these inequalities we get the following:
\begin{eqnarray*}
&& \pr[\neg E_i | E_1, ..., E_{i-1}] \geq \frac{1}{2 + (1/2)\frac{k}{2^{2\beta k}}} \geq \frac{\beta}{3}\\
\Rightarrow && \pr[E_i | E_1, ..., E_{i-1}] \leq 1-\frac{\beta}{3}
\end{eqnarray*} \qed
\end{proof}
So we get that $\pr[E_1, ..., E_{\alpha k}] \leq (1 - \beta/3)^{\alpha k} \leq e^{-(\alpha \beta/3) k}$. This completes the proof of Lemma~\ref{lemma-1}. \qed
\end{proof}

The next lemma shows that unless a large number of optimal clusters get covered, the approximation factor is bad.

\begin{lemma}\label{lemma-2}
Let $C$ denote the centers chosen by the k-means++ algorithm. If $C$ covers $\leq \alpha \cdot k$ clusters, then
\[
\frac{\phi_{C}(X)}{\phi_{OPT}(X)} \geq \frac{9 - \alpha}{8}
\]
\end{lemma}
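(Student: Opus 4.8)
The plan is to lower-bound $\phi_C(X)$ by arguing that if at most $\alpha k$ optimal clusters are covered, then at least $(1-\alpha)k$ clusters contribute their full uncovered cost, and this uncovered cost is large because the uncovered clusters must include all of the ``far-right'' clusters — or more precisely, uncovered clusters are expensive relative to the optimum $\phi_{OPT}(X) = kmr^2$. First I would set up notation: for each uncovered cluster $A_j$, every one of its $m/2^{2(j-1)}$ points must be assigned to the nearest chosen center, which lies in some other cluster. Since the horizontal spacing between consecutive vertical bars is $2r$ and bar $j$ sits at $x$-coordinate $\sum_{\ell<j} 2r$ (so bars are $\Theta(r)$ apart in a geometric-free way — actually at distance roughly $2r$ from each neighbour), the squared distance from a point of an uncovered $A_j$ to any center outside $A_j$ is at least $(2r)^2 = 4r^2$; more carefully, to the nearest neighbouring bar it is $\Omega(r^2)$ but I would want a clean constant, and the natural one to extract is $4 r^2$ coming from the horizontal gap, versus the optimal per-point cost which is $r^2$ (the vertical half-length). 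So each uncovered point costs at least $4r^2$ against an optimal $r^2$.

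Next I would account for which clusters are uncovered. Because the cluster sizes decay as $m/2^{2(i-1)}$, the total number of points is dominated by the first cluster, and similarly the cost is dominated by low-indexed clusters. If $C$ covers only $\alpha k$ clusters, the cheapest thing the adversary-against-us (the algorithm) can do to minimize $\phi_C$ is to cover the $\alpha k$ clusters of largest size, i.e.\ $A_1,\dots,A_{\alpha k}$, leaving $A_{\alpha k+1},\dots,A_k$ uncovered. The contribution of these uncovered clusters is then at least $4r^2 \cdot \sum_{j=\alpha k+1}^{k} \frac{m}{2^{2(j-1)}}$, but this geometric tail is tiny, so that cannot be the whole story. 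The correct accounting must instead be: the covered clusters still pay their optimal cost (one center inside, cost $= r^2$ per point for the near half and $\Theta(r^2)$ for the far half of the bar, total $\le \Theta(mr^2/2^{2(i-1)})$ per covered cluster $i$), while each uncovered cluster $A_j$ pays $\ge 4 r^2$ per point $= 4\cdot (\text{its optimal cost})$. Summing, $\phi_C(X) \ge 4\sum_{j \text{ uncovered}} (\text{opt cost of } A_j) + \sum_{i \text{ covered}} (\text{opt cost of }A_i) \cdot (\text{const})$, and since $\sum_i (\text{opt cost of }A_i) = kmr^2$ while the covered clusters are only $\alpha k$ of them, I would bound $\sum_{i \text{ covered}} (\text{opt cost of } A_i) \le \alpha k m r^2$ and hence $\sum_{j \text{ uncovered}}(\text{opt cost}) \ge (1-\alpha) k m r^2$. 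Plugging in gives $\phi_C(X) \gtrsim \big(4(1-\alpha) + (\text{something})\alpha\big) k m r^2$; tuning the bookkeeping constants so the covered clusters still contribute their honest share should land exactly on $\frac{9-\alpha}{8}$ after dividing by $\phi_{OPT}(X)=kmr^2$. In particular, note $\frac{9-\alpha}{8} = 1 + \frac{1-\alpha}{8}$, which strongly suggests the intended split is: covered clusters pay their optimal cost ($1$ in normalized units) and uncovered clusters pay an \emph{extra} $\tfrac18$ of the total optimum per unit fraction uncovered — i.e.\ an uncovered point costs $\tfrac98$ of its optimal cost, coming from the exact geometry (a point of $A_j$ at height $r\cdot(\text{sign})$ reassigned to a center that is horizontally one bar away but at a comparable height). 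I would recompute the precise squared distances between adjacent bar-tops to confirm the factor is $9/8$ rather than $4$ or $5$; this is where the specific coordinates in Figure~\ref{fig:example} matter.

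The main obstacle I anticipate is exactly this constant-chasing step: getting the clean bound ``uncovered point costs $\ge \tfrac98 \times$ optimal point cost'' requires knowing the exact placement of the bars (the figure shows $A_1$ at $x=0$ with height $r/2$, $A_2$ at $x=2r$ with height $r$, etc., so the heights themselves vary), and the nearest out-of-cluster center for an uncovered $A_j$ could be in $A_{j-1}$, $A_{j+1}$, or further, with different heights — I would need to verify that in \emph{all} these cases the squared distance is at least $\tfrac98$ times the point's optimal (squared) distance $r^2$ to its own cluster center, taking the worst (smallest) case over the geometry. A secondary, easier obstacle is handling the ``far half'' of each covered bar (the $m$ points at $(0,-r/2)$ when a center is picked at $(0,r/2)$): these still cost more than their optimal $r^2$, but only by a factor that does not hurt us, so I would just bound their cost from below by the optimal and move on. Once those two geometric facts are pinned down, the lemma follows by the additive accounting above with no further cleverness.
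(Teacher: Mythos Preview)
Your skeleton --- lower-bound the uncovered clusters, lower-bound the covered clusters, add --- is exactly the paper's, but the geometric and combinatorial details you guess at are wrong in ways that matter, and two of your errors happen to cancel in a way that conceals a real gap.

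First, the bars are not equally spaced: the horizontal gap between $A_{i-1}$ and $A_i$ is $2^{i-1}r$, and the half-height of $A_i$ is $2^{i-2}r$, both growing geometrically. The whole point of this scaling is that \emph{every} optimal cluster has cost exactly $mr^2$: cluster $A_i$ has $m/4^{i-1}$ points at each end, the far end sits at distance $2^{i-1}r$ from a center at the near end, and $(m/4^{i-1})\cdot(2^{i-1}r)^2 = mr^2$. So your ``adversary covers the largest clusters'' worry is a red herring --- which clusters are covered is irrelevant to the optimal-cost bookkeeping. Second, and this is the main gap, your claim that ``covered clusters pay their optimal cost'' is false: a covered cluster with centers at \emph{both} ends pays $0$, not $mr^2$. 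Since $D^2$-sampling never repeats a location, the $k$ centers occupy $k$ distinct locations inside the (at most) $\alpha k$ covered clusters, each of which offers only two locations; solving $s+d=\alpha k$, $s+2d=k$ gives $(1-\alpha)k$ doubly-covered clusters (cost $0$) and $(2\alpha-1)k$ singly-covered clusters (cost $mr^2$). This pigeonhole step is essential and you do not have it. Third, the correct lower bound on an uncovered cluster's cost is $\tfrac{17}{8}mr^2$, not $\tfrac{9}{8}mr^2$: the closest possible out-of-cluster center to a point of $A_i$ is the same-side end of $A_{i-1}$, at squared distance $(2^{i-1}r)^2+(2^{i-2}r-2^{i-3}r)^2$, and multiplying by the $2\cdot m/4^{i-1}$ points of $A_i$ gives $(17/8)mr^2$. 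Your guessed $9/8$ produces the right final number only because you simultaneously overcredited each of the $(1-\alpha)k$ doubly-covered clusters by $mr^2$; with the correct constants the sum is $(2\alpha-1)k\cdot mr^2 + (1-\alpha)k\cdot 0 + (1-\alpha)k\cdot\tfrac{17}{8}mr^2 = \tfrac{9-\alpha}{8}\,kmr^2$, which is the lemma.
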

\begin{proof}
Note that $\phi_{OPT}(X) = kmr^2$. 
This is when all the optimal clusters are covered. 
Suppose $C$ is such that $\leq \alpha k$ clusters are covered by $C$. 
Let $A_i$ be an optimal cluster that is not covered with respect to $C$.
Then we have:
\begin{eqnarray*}
\phi_C(A) &\geq& 2 \cdot \frac{m}{2^{2i-2}} \cdot \left( (2^{i-2} r - 2^{i-3} r)^2 + (2^{i-1} r)^2\right) \\
&=& 2m \cdot ( r^2/16 + r^2) \\
&=& mr^2 \cdot (2 + 1/8) = (17/8) mr^2
\end{eqnarray*}
Using this, we have:
\begin{eqnarray*}
\phi_{C}(X) &\geq& mr^2 \cdot (\alpha k - (1-\alpha)k) + (1 - \alpha) \cdot k \cdot (17/8)mr^2 \\
&=& kmr^2(2\alpha - 1) + (17/8)(1-\alpha)k m r^2 \\
&=& kmr^2 \cdot \frac{9 -  \alpha}{8}
\end{eqnarray*}
So, we get $\frac{\phi_{C}(X)}{\phi_{OPT}(X)} \geq \frac{9-\alpha}{8}$. \qed

\end{proof}

We now need to show that the probability that k-means++ algorithm covers more than $\eta k$ (for some constant $\eta$) clusters is exponentially 
small in $k$. 
To prove this, we define and analyze a random sampling procedure which may be of independent interest.

\subsection{A Sampling Problem}
In this section, we analyze a sampling procedure that will help in the analysis of the behavior of k-means++ for our counterexample. 
This might be of independent interest.
The procedure is defined as follows:
\begin{quote}
{\bf SampBall}: There are $2k$ balls each colored with one of $k$ colors such that for each color there are exactly two balls with that color. 
$k$ balls are sampled randomly without replacement out of these $2k$ balls.
\end{quote}

Let $B$ be the random variable denoting the sampled set of of size $k$. We are interested in bounding the probability that $B$ contains balls with more than $7k/8$ different colors. 
Let $E_i$ be the probability that $B$ contains of exactly $i$ distinct colors.
The next lemma bounds the probability of the event $E_i$.

\begin{lemma}
For any $i \geq 7k/8$, $\pr[E_i] \leq \frac{5}{\sqrt{k}} \cdot 2^{-k/16}$.
\end{lemma}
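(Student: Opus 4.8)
The plan is to compute $\pr[E_i]$ in closed form and then bound it term by term. Observe that the sampled set $B$ has exactly $i$ distinct colors if and only if exactly $p := k-i$ colors have \emph{both} of their balls in $B$: such $p$ colors account for $2p$ balls, and the remaining $k-2p$ balls of $B$ must then come from $k-2p$ distinct colors (one ball each), so the number of colors present is $p+(k-2p)=k-p=i$. Counting the $k$-subsets that realize this --- choose the $p$ doubled colors, then the $k-2p$ colors contributing a single ball, then which of the two balls each of those contributes --- and dividing by $\binom{2k}{k}$ gives
\[
\pr[E_i] \;=\; \frac{\binom{k}{p}\,\binom{k-p}{k-2p}\,2^{\,k-2p}}{\binom{2k}{k}} \;=\; \frac{\binom{k}{2p}\binom{2p}{p}\,2^{\,k-2p}}{\binom{2k}{k}}\,,\qquad p=k-i,
\]
the second form using $\binom{k-p}{k-2p}=\binom{k-p}{p}$ and $\binom{k}{p}\binom{k-p}{p}=\binom{k}{2p}\binom{2p}{p}$.

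Write $g(p)$ for the numerator above. Since $i\ge 7k/8$ is equivalent to $p\le k/8$ (in fact $p$ ranges over $\{0,1,\dots,\lfloor k/8\rfloor\}$), I would first argue that it suffices to bound $g$ at its largest admissible argument, $p=\lfloor k/8\rfloor$. Indeed, a one-line computation from the second form gives the ratio of consecutive terms,
\[
\frac{g(p+1)}{g(p)} \;=\; \frac{(k-2p)(k-2p-1)}{4\,(p+1)^2}\,,
\]
and for $p\le k/8$ we have $k-2p\ge \tfrac34 k$, so the numerator is at least $\tfrac34 k\bigl(\tfrac34 k-1\bigr)$ while the denominator is at most $4\bigl(\tfrac18 k+1\bigr)^2$; for $k\ge 10^3$ the former exceeds the latter, so $g$ is strictly increasing on $\{0,1,\dots,\lfloor k/8\rfloor\}$ and hence $\pr[E_i]$ over the whole range $i\ge 7k/8$ is maximized at $p=\lfloor k/8\rfloor$.

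It then remains to bound that single term. Using $\binom{2p}{p}\le 4^p$ we get $\binom{2p}{p}\,2^{\,k-2p}\le 2^{k}$; using that $\binom{k}{2p}$ is increasing for $2p\le k/2$ together with the entropy bound $\binom{k}{2p}\le 2^{\,kH(1/4)}<2^{0.82k}$ (valid since $2p\le k/4$ and the binary entropy $H$ is increasing on $[0,\tfrac12]$); and using the standard estimate $\binom{2k}{k}\ge 4^{k}/(2\sqrt k)$, I obtain
\[
\pr[E_i]\;\le\;\frac{2^{k}\cdot 2^{\,0.82k}\cdot 2\sqrt k}{4^{k}}\;=\;2\sqrt k\cdot 2^{-0.18 k}\,.
\]
Since the decay rate $0.18$ comfortably exceeds $1/16$, the exponent surplus absorbs the polynomial prefactor once $k\ge 10^3$: the inequality $2\sqrt k\cdot 2^{-0.18k}\le \tfrac{5}{\sqrt k}\,2^{-k/16}$ is equivalent to $\tfrac{2k}{5}\le 2^{(0.18-1/16)k}$, which holds with an enormous margin. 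This gives $\pr[E_i]\le \tfrac{5}{\sqrt k}\,2^{-k/16}$ for every $i\ge 7k/8$.

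The binomial identity and the ratio computation are routine; the only delicate point is the last numerical step --- one must plug in an upper bound on $\binom{k}{k/4}$ that is sharp enough (the trivial $\binom{k}{2p}\le 2^{k}$ would make the numerator $4^{k}$ and kill the decay), and then verify that the resulting constants leave enough slack past the target exponent $1/16$ to swallow the $\mathrm{poly}(k)$ factor. I expect that constant-chasing, rather than any conceptual difficulty, to be the main obstacle.
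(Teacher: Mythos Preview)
Your proof is correct and takes a genuinely different route from the paper's. The paper views the sampling as a sequential process that emits a letter $N$ (new color) or $O$ (old color) at each step, bounds $\pr[E_i]$ by $\binom{k}{i}$ times the probability of the single string $N^iO^{k-i}$, collapses that product of conditional probabilities to $\frac{2^i(k!)^3}{((k-i)!)^2(2k)!(2i-k)!}$, and finishes with Stirling's formula. You instead count $k$-subsets directly and obtain the \emph{exact} value $\pr[E_i]=\binom{k}{2p}\binom{2p}{p}2^{k-2p}/\binom{2k}{k}$ (which is in fact tighter than the paper's expression by the factor $2^{p}=2^{k-i}$ coming from their strings-to-one-string inequality), then reduce to the worst admissible $p$ via a clean ratio test and finish with the entropy bound on $\binom{k}{k/4}$ and the central-binomial estimate on $\binom{2k}{k}$ rather than Stirling. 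Your route sidesteps the implicit claim that $N^iO^{k-i}$ dominates every other string with $i$ N's, and trades the Stirling calculation for two standard one-line inequalities; the paper's sequential viewpoint, on the other hand, is what generalizes to the biased procedure \textbf{BiasedSampBall} analyzed immediately afterwards, where no exact closed form is available and the $N/O$ string argument becomes essential.
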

\begin{proof}
Let us consider the following alternative procedure $\mathcal{P}'$: whenever a new colored ball is sampled, the procedure outputs the letter ``N'' and if a ball with the same color has already been sampled, then $\mathcal{P}'$ outputs ``O''. 
Note that $\pr[E_i]$ is equal to the probability that $\mathcal{P}'$ outputs a string in $\{N, O\}^k$ such that there are exactly $i$ N's. Let $S$ be the random variable denoting the string output of $\mathcal{P}'$. Then we have:
\begin{eqnarray*}
\pr[E_i] &=& \pr[S \textrm{ has exactly } i \textrm{ N's}] \\
&\leq& \binom{k}{i} \cdot \pr[S = \underbrace{NN...N}_{i\ terms} \underbrace{OO...O}_{(k-i)\ terms}] \\
&=& \binom{k}{i} \cdot \frac{2k}{2k} \cdot \frac{2(k-1)}{2(k-1) + 1} ... \frac{2(k-i+1)}{2(k-i+1) + i-1} \cdot \frac{i}{i+2(k-i)} ...\frac{2i-k+1}{k+1}\\
&=& \binom{k}{i} \cdot \frac{2^i \cdot (k!)^2 \cdot i!}{(k-i)! \cdot (2k)! \cdot (2i-k)!}\\
&=& \frac{2^i \cdot (k!)^3}{((k-i)!)^2 \cdot (2k)! \cdot (2i-k)!} \\
&\leq& \frac{2^{7k/8} \cdot (k/e)^{3k} \cdot e^3 \cdot k^{3/2}}{2\pi \cdot (k/8) \cdot (k/8e)^{k/4} \cdot \sqrt{2\pi} \cdot \sqrt{2k} \cdot (2k/e)^{2k} \cdot \sqrt{2\pi} \cdot \sqrt{k/2} \cdot (3k/4e)^{3k/4}} \\
&=& \frac{2 \cdot e^3 \cdot 2^{3k/4 + 7k/8} \cdot (4/3)^{3k/4}}{\pi^2 \cdot \sqrt{k} \cdot 2^{2k}} \\
&\leq& \frac{2 \cdot e^3 \cdot 2^{3k/4 + 7k/8 + 5k/16}}{\pi^2 \cdot \sqrt{k} \cdot 2^{2k}} \\
&\leq& \frac{5}{\sqrt{k}} \cdot 2^{-k/16}
\end{eqnarray*}\qed
\end{proof}

This gives us the following useful corollary.

\begin{corollary}
$\pr[B \textrm{ has more than $7k/8$ colored balls}] \leq 5 \sqrt{k} \cdot 2^{-k/16}$.
\end{corollary}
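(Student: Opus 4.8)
The plan is to obtain the corollary directly from the preceding lemma by a union bound. The key observation is that the sampled set $B$ has exactly $k$ balls, so the number of distinct colors appearing in $B$ is some integer between $1$ and $k$; consequently the event ``$B$ has more than $7k/8$ colored balls'' is the disjoint union of the events $E_i$ over all integers $i$ with $7k/8 < i \leq k$.

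First I would note that the set of relevant indices $\{i : 7k/8 < i \leq k,\ i \textrm{ an integer}\}$ has size at most $k/8$, and in particular at most $k$. Each such $i$ satisfies $i \geq 7k/8$, so the lemma applies and gives $\pr[E_i] \leq \frac{5}{\sqrt{k}} \cdot 2^{-k/16}$. Summing over these indices then yields
\[
\pr[B \textrm{ has more than } 7k/8 \textrm{ colored balls}] = \sum_{7k/8 < i \leq k} \pr[E_i] \leq k \cdot \frac{5}{\sqrt{k}} \cdot 2^{-k/16} = 5\sqrt{k} \cdot 2^{-k/16},
\]
which is precisely the claimed bound.

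I do not expect any real obstacle here: all of the analytic work --- the Stirling-estimate computation that bounds a single $\pr[E_i]$ --- has already been done inside the lemma, and what remains is only the routine bookkeeping of the union bound. The only point that deserves a word of care is the integrality of $7k/8$ when $k$ is not divisible by $8$, but since the argument only uses the crude count of at most $k$ summands, this causes no trouble.
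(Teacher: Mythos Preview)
Your proposal is correct and matches the paper's approach: the paper states the corollary as an immediate consequence of the lemma with no further proof, and the argument it has in mind is exactly your summation of the disjoint events $E_i$ over at most $k$ indices $i$ in $(7k/8,k]$, yielding $k \cdot \tfrac{5}{\sqrt{k}}\cdot 2^{-k/16} = 5\sqrt{k}\cdot 2^{-k/16}$.
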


The relationship of the sampling procedure {\bf SampBall} with the counter-example should not be very difficult to see. Sampling a ball with new color corresponds to sampling a center from an uncovered cluster and so on. 
The main difference is that sampling a center from a new cluster is more likely than sampling a ball with new color. 
We modify our sampling procedure to be able to use the analysis for analyzing k-means++ over our counterexample. 
Here is our new sampling procedure:

\begin{quote}
{\bf BiasedSampBall}: There are $2k$ balls each colored with one of $k$ colors such that for each color there are two balls with that color. $k$ balls are sampled randomly without replacement out of these $2k$ balls. There is a bias towards sampling balls of new color. When sampling a ball the probability of sampling a ball of new color is at most $\gamma$ times more than the probability of sampling a ball with color that has already been picked.
\end{quote}
Any value of $1 \leq \gamma \leq 5$ will work for our purposes. We are interested in the probability that the above randomized procedure picks balls of at least $(0.99)k$ different colors. Let $B$ be the random variable denoting the sample of $k$ balls.
Let $E_i$ be the probability that $B$ contains of exactly $i$ distinct colors.
Next, we bound the probability of the event $E_i$.
First, we need the following simple lemma bounding a quantity we will later need.

\begin{lemma}\label{lemma:calc}
$\prod_{j=1}^{i} (2k - (j-1)\cdot \frac{9}{5}) \geq 2^i \cdot \frac{k!}{\left( k - \frac{9}{10}\cdot i \right)!} \cdot (k-\frac{9}{10}\cdot i)^{i/10}$.
\end{lemma}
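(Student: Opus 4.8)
The plan is to factor the common $2$ out of each of the $i$ terms and then split the remaining product into a ``head'' of $\tfrac{9}{10}i$ factors, which will account for the factorial ratio $k!/(k-\tfrac{9}{10}i)!$, and a ``tail'' of $\tfrac{i}{10}$ factors, which will account for the leftover power $(k-\tfrac{9}{10}i)^{i/10}$. Concretely, I would first rewrite
\[
\prod_{j=1}^{i}\Bigl(2k-(j-1)\tfrac{9}{5}\Bigr)=\prod_{\ell=0}^{i-1}2\Bigl(k-\tfrac{9}{10}\ell\Bigr)=2^{i}\prod_{\ell=0}^{i-1}\Bigl(k-\tfrac{9}{10}\ell\Bigr),
\]
so that it suffices to prove $\prod_{\ell=0}^{i-1}\bigl(k-\tfrac{9}{10}\ell\bigr)\ge \frac{k!}{(k-\frac{9}{10}i)!}\,(k-\tfrac{9}{10}i)^{i/10}$. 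I will assume, as in the intended application, that $i$ is a multiple of $10$ so that $\tfrac{9}{10}i$ and $\tfrac{i}{10}$ are integers (the general case follows by replacing them with the appropriate floor/ceiling and loosening constants slightly, or via the rounding convention used elsewhere in the paper); note also that every factor $k-\tfrac{9}{10}\ell$ is strictly positive throughout the range of interest since $\tfrac{9}{10}\ell<k$.

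Next I would partition $\{0,1,\dots,i-1\}$ into the head $H=\{0,\dots,\tfrac{9}{10}i-1\}$ and the tail $T=\{\tfrac{9}{10}i,\dots,i-1\}$ and bound the two sub-products separately. For the head, the key termwise inequality is $k-\tfrac{9}{10}\ell\ge k-\ell$, which holds because $\tfrac{9}{10}\ell\le\ell$; multiplying over $\ell\in H$ gives
\[
\prod_{\ell=0}^{\frac{9}{10}i-1}\Bigl(k-\tfrac{9}{10}\ell\Bigr)\ \ge\ \prod_{\ell=0}^{\frac{9}{10}i-1}(k-\ell)\ =\ \frac{k!}{(k-\frac{9}{10}i)!}.
\]
For the tail, each factor satisfies $k-\tfrac{9}{10}\ell\ge k-\tfrac{9}{10}(i-1)> k-\tfrac{9}{10}i$, and since $|T|=\tfrac{i}{10}$ the tail product is at least $(k-\tfrac{9}{10}i)^{i/10}$. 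Multiplying the head and tail bounds, and then multiplying by the factored-out $2^{i}$, yields exactly the claimed inequality.

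I do not anticipate a real obstacle here: the argument is essentially bookkeeping about which factor of the left-hand product ``pays for'' which factor on the right, and the inequality is in fact rather loose. The only points that need mild care are integrality (ensuring $\tfrac{9}{10}i\in\mathbb{Z}$) and making the factorials meaningful, i.e. $k-\tfrac{9}{10}i\ge 1$, equivalently $i\le\tfrac{10}{9}(k-1)$ — both of which hold in the parameter regime in which Lemma~\ref{lemma:calc} is invoked.
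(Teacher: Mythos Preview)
Your argument is correct and in fact cleaner than the paper's. Both proofs first pull out the factor $2^{i}$, but thereafter they diverge. The paper proceeds by grouping the factors $2k,\,2k-\tfrac{9}{5},\,2k-\tfrac{18}{5},\dots$ into blocks of five, lower-bounding each block by a run of integers spaced by two (e.g.\ $2k,2k-2,2k-4,2k-6,2k-8$, then $2k-9,2k-11,\dots$), then again pulling out $2$'s to obtain a product of consecutive integers in which every tenth value repeats; this repetition is what produces the extra factor $(k-\tfrac{9}{10}i)^{i/10}$ after the falling factorial $k!/(k-\tfrac{9}{10}i)!$ has been accounted for. Your route avoids the block bookkeeping entirely: the single termwise inequality $k-\tfrac{9}{10}\ell\ge k-\ell$ on the head $\ell\in\{0,\dots,\tfrac{9}{10}i-1\}$ immediately yields the factorial ratio, and the trivial bound $k-\tfrac{9}{10}\ell> k-\tfrac{9}{10}i$ on the tail $\ell\in\{\tfrac{9}{10}i,\dots,i-1\}$ gives the power. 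This is more transparent and also makes it obvious where the slack is. Your caveats about integrality ($10\mid i$) and positivity ($k>\tfrac{9}{10}i$) are exactly the implicit assumptions the paper is using as well.
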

\begin{proof}
We show the above lemma using the following calculations:
\begin{eqnarray*}
\prod_{j=1}^{i} (2k - (j-1)\cdot \frac{9}{5}) &=& \left(2k \right) \cdot \left(2k - \frac{9}{5} \right) \cdot \left(2k - \frac{18}{5}\right) \cdot \left(2k - \frac{27}{5} \right) \ldots \left(2k-\frac{9}{5}(i-1)\right) \\
&\geq& (2k) \cdot (2k - 2) \cdot (2k - 4) \cdot (2k - 6)...(2k - 8) \cdot  \\
&& (2k - 9)\cdot (2k - 11)\cdot (2k - 13) \cdot (2k - 15) \cdot (2k - 17) \cdot \\
&& (2k - 18) \cdot (2k - 20)\cdot (2k - 22) \cdot (2k - 24) \cdot (2k - 26) \cdot \\
&& ...\left(2k-\frac{9}{5}(i-1)\right) \\
&\geq& 2^i \cdot (k) \cdot (k - 1) \cdot (k-2) \cdot (k-3) \cdot (k-4) \cdot \\
&& (k-5) \cdot (k-6) \cdot (k-7) \cdot (k-8) \cdot (k-9) \cdot \\
&& (k-9) \cdot (k-10) \cdot (k-11) \cdot (k-12) \cdot (k-13) \cdot \\
&& ...\left(k-\frac{9}{10}(i-1)\right) \\
&\geq& 2^i \cdot \frac{k!}{\left(k-\frac{9}{10}i\right)!} \cdot \left(k - \frac{9}{10}\cdot i \right)^{i/10}
\end{eqnarray*}\qed
\end{proof}

\begin{lemma}
For any $i \geq (0.99) k$ and $1 \leq \gamma \leq 5$, $\pr[E_i] \leq \frac{1}{\sqrt{k}} \cdot 2^{-k/64}$.
\end{lemma}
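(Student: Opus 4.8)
The plan is to mirror the proof of the corresponding lemma for \textbf{SampBall}, carrying the bias parameter $\gamma$ through every estimate. As there, I first reformulate the process by an auxiliary procedure $\mathcal{P}'$ that, on the $t$-th draw, outputs ``N'' if a ball of a fresh color is drawn and ``O'' otherwise; then $\pr[E_i]$ equals the probability that the output string $S\in\{N,O\}^k$ has exactly $i$ N's. The key structural claim is that among all strings with exactly $i$ N's, the string $N^iO^{k-i}$ (all N's first) has the largest probability, so that $\pr[E_i]\le\binom{k}{i}\cdot\pr[S=N^iO^{k-i}]$ (strings that are not legal outputs simply contribute $0$).

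Unlike the unbiased case, the one-step probabilities now depend on \emph{both} the number $n$ of N's and the number $o$ of O's produced so far: after $n$ N's and $o$ O's there remain $2(k-n)$ fresh-colored balls (total weight $2\gamma(k-n)$) and $n-o$ old-colored balls (total weight $n-o$), so the next symbol is N with probability $\frac{2\gamma(k-n)}{2\gamma(k-n)+(n-o)}$ and O with probability $\frac{n-o}{2\gamma(k-n)+(n-o)}$. I would establish the structural claim by an exchange argument: take two strings that agree everywhere except that one has an adjacent ``ON'' where the other has ``NO''; since the process is in the same state before this pair and in the same state after it, it suffices to compare the product of the two relevant one-step probabilities, and a short computation shows, using only $\gamma\ge 1$ and $n-o\ge 1$, that the ``NO'' ordering has the larger product. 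Bubble-sorting an arbitrary legal string to $N^iO^{k-i}$ (possible since $i\ge 0.99k\ge k/2$) then proves the claim; along the way one checks that each intermediate string is still a legal output, i.e.\ has at least as many N's as O's in every prefix.

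Next I would expand $\pr[S=N^iO^{k-i}]$ as a telescoping product. During the $N$-phase we have $o=0$, so this probability equals a product of $k-i$ O-step factors (each at most $1$) times $\prod_{n=0}^{i-1}\frac{2\gamma(k-n)}{2\gamma(k-n)+n}$; each N-step factor is increasing in $\gamma$, so using $\gamma\le 5$ gives
\[
\pr[S=N^iO^{k-i}]\ \le\ \prod_{n=0}^{i-1}\frac{10(k-n)}{10k-9n}\ =\ \frac{10^i\cdot k!}{(k-i)!\cdot\prod_{n=0}^{i-1}(10k-9n)}.
\]
Writing $\prod_{n=0}^{i-1}(10k-9n)=5^i\prod_{j=1}^{i}\bigl(2k-\frac{9}{5}(j-1)\bigr)$ and invoking Lemma~\ref{lemma:calc} lower-bounds this denominator by $10^i\cdot\frac{k!}{(k-\frac{9}{10}i)!}\cdot(k-\frac{9}{10}i)^{i/10}$, so after cancellation
\[
\pr[E_i]\ \le\ \binom{k}{i}\cdot\frac{(k-\frac{9}{10}i)!}{(k-i)!\cdot(k-\frac{9}{10}i)^{i/10}}\ =\ \frac{k!\cdot(k-\frac{9}{10}i)!}{i!\cdot((k-i)!)^2\cdot(k-\frac{9}{10}i)^{i/10}}.
\]
Finally I would apply Stirling's approximation to all factorials and set $i=\theta k$ with $\theta\in[0.99,1]$; the $k^{\Theta(k)}$ factors cancel, the remaining exponential factor is $2^{g(\theta)k}$ for an explicit $g$, and a numeric check shows $g$ is maximized at $\theta=0.99$ with $g(0.99)\approx-0.027<-\frac{1}{64}$, while the polynomial/constant prefactor is $\Theta(1/\sqrt{k})$ and is absorbed into the stated bound for $k$ large.

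The main obstacle is the structural claim of the second paragraph: because the per-step probabilities in the \emph{biased} process depend on the full state $(n,o)$ and not merely on the step index, the optimality of $N^iO^{k-i}$ is no longer immediate and needs the exchange argument, and one must be careful that the $\gamma$-dependence in the N-step factors is exploited in the monotone direction (bounded above by the $\gamma=5$ value) so the inequality points the right way. Everything else — the telescoping, the appeal to Lemma~\ref{lemma:calc}, and the Stirling bookkeeping — is routine, modulo the small numeric verification that the exponent $g(\theta)$ stays below $-1/64$ on $[0.99,1]$.
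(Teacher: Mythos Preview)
Your approach is essentially the paper's: reformulate via $\mathcal{P}'$, bound $\pr[E_i]$ by $\binom{k}{i}\cdot\pr[S=N^iO^{k-i}]$, use $\gamma\le 5$ on the N-phase factors, apply Lemma~\ref{lemma:calc}, and finish with Stirling plus a numeric check. Two minor differences are worth noting. First, you supply an adjacent-swap exchange argument to justify that $N^iO^{k-i}$ maximizes the probability among strings with $i$ N's; the paper simply asserts this inequality (as it did for \textbf{SampBall}), so your version is more careful here, and your computation that the swap $ON\to NO$ never decreases the product is correct. Second, you bound every O-step factor by $1$ and drop them, whereas the paper keeps the O-phase product and bounds it via $\gamma\ge 1$, picking up an extra factor $\frac{i!\,k!}{(2i-k)!\,(2k-i)!}$. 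Your route is slightly simpler and slightly looser, but the resulting exponent (about $-0.027$ at $\theta=0.99$, as you compute) still clears $-1/64$ with room to absorb the $\Theta(1/\sqrt{k})$ prefactor, so nothing is lost.
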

\begin{proof}
Let us consider the following alternative procedure $\mathcal{P}'$: whenever a new colored ball is sampled, the procedure outputs the letter ``N'' and if a ball with the same color has already been sampled, then $\mathcal{P}'$ outputs ``O''. 
Note that $\pr[E_i]$ is equal to the probability that $\mathcal{P}'$ outputs a string in $\{N, O\}^k$ such that there are exactly $i$ N's. Let $S$ be the random variable denoting the string output of $\mathcal{P}'$. Then we have:
\begin{eqnarray*}
\pr[E_i] &=& \pr[S \textrm{ has exactly } i \textrm{ N's}] \\
&\leq& \binom{k}{i} \cdot \pr[S = \underbrace{NN...N}_{i\ terms} \underbrace{OO...O}_{(k-i)\ terms}] \\
&=& \binom{k}{i} \cdot \frac{2\gamma k}{2\gamma k} \cdot \frac{2\gamma(k-1)}{2\gamma(k-1) + 1} ... \frac{2\gamma(k-i+1)}{2\gamma(k-i+1) + i-1} \cdot \frac{i}{i+2\gamma(k-i)} ...\frac{2i-k+1}{2i-k+1 + 2\gamma(k-i)}\\
&\leq & \binom{k}{i} \cdot \frac{2k}{2k} \cdot \frac{2(k-1)}{(2k-1) - 1 \cdot (1 - 1/\gamma)} ... \frac{2(k-i+1)}{(2k-i+1) - (i-1)\cdot (1-1/\gamma)} \cdot \\
&& \qquad \frac{i}{i+2\gamma(k-i)} ...\frac{2i-k+1}{2i-k+1 + 2\gamma(k-i)}\\
&\leq & \binom{k}{i} \cdot \frac{2k}{2k} \cdot \frac{2(k-1)}{(2k-1) - 1 \cdot (4/5)} \cdot \frac{2(k-2)}{(2k-2) - 2 \cdot (4/5)}... \frac{2(k-i+1)}{(2k-i+1) - (i-1)\cdot (4/5)} \cdot \\
&& \qquad \frac{i}{i+2(k-i)} ...\frac{2i-k+1}{2i-k+1 + 2(k-i)} \quad \textrm{(since $1 \leq \gamma \leq 5$)}\\
&\leq& \binom{k}{i} \cdot \frac{2^i \cdot k\cdot (k-1) ... (k-i+1)}{(2k)\cdot (2k - 1\cdot (9/5)) \cdot (2k - 2\cdot (9/5)) ... (2k - (i-1)\cdot (9/5))} \cdot
\frac{i \cdot (i-1)...(2i-k+1)}{(2k-i) \cdot (2k - i-1)...(k+1)}  \\
&=& \binom{k}{i} \cdot  \frac{k!}{(k-i)!} \cdot \frac{(k - (9/10)i)!}{k!} \cdot \frac{1}{(k - (9/10)i)^{i/10}} \cdot \frac{i!}{(2i-k)!} \cdot \frac{k!}{(2k-i)!} \textrm{(using Lemma~\ref{lemma:calc})}\\
&=& \binom{k}{i} \cdot  \frac{1}{(k-i)!} \cdot \frac{(k - (9/10)i)!}{1} \cdot \frac{1}{(k - (9/10)i)^{i/10}} \cdot \frac{i!}{(2i-k)!} \cdot \frac{k!}{(2k-i)!} \\
&=& \frac{k! \cdot (k - (9/10)i)!}{((k-i)!)^2} \cdot \frac{1}{(k - (9/10)i)^{i/10}} \cdot \frac{1}{(2i-k)!} \cdot \frac{k!}{(2k-i)!} \\
&\leq& \frac{1}{\sqrt{k}} \cdot 2^{-k/64}
\end{eqnarray*}
Note that the last step is obtained by using Sterling's approximation and plotting the resulting function. \qed
\end{proof}

This gives us the following useful corollary.

\begin{corollary}
$\pr[B \textrm{ has more than $(0.99) k$ colored balls}] \leq \sqrt{k} \cdot 2^{-k/64}$.
\end{corollary}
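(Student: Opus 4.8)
The plan is to obtain the corollary directly from the preceding lemma by a union bound, with essentially no new work. First I would observe that the event ``$B$ has more than $(0.99)k$ colored balls'' is precisely the disjoint union of the events $E_i$ over all integers $i$ with $(0.99)k < i \le k$, since $E_i$ denotes the event that $B$ contains exactly $i$ distinct colors and $B$ consists of $k$ balls so it has at most $k$ distinct colors. Hence
\[
\pr[B \textrm{ has more than } (0.99)k \textrm{ colored balls}] = \sum_{i \,:\, (0.99)k < i \le k} \pr[E_i].
\]

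Next I would invoke the bound $\pr[E_i] \le \frac{1}{\sqrt{k}} \cdot 2^{-k/64}$ established in the previous lemma, which holds for every $i \ge (0.99)k$ and for the relevant bias parameter $\gamma \in [1,5]$ arising from the {\bf BiasedSampBall} procedure. Since the number of integers $i$ in the range $(0.99)k < i \le k$ is at most $k$, the right-hand sum is at most $k \cdot \frac{1}{\sqrt{k}} \cdot 2^{-k/64} = \sqrt{k} \cdot 2^{-k/64}$, which is exactly the claimed bound.

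There is no real obstacle: all of the analytic content — encoding the sampling process as $\{N,O\}$-strings, the factorial manipulation via Lemma~\ref{lemma:calc}, and the Stirling estimate — is already packaged inside the lemma we may assume. The only point that needs a moment's care is the bookkeeping, namely that ``more than $(0.99)k$ colors'' corresponds to $i$ strictly greater than $(0.99)k$, and that even the crude count of at most $k$ admissible values of $i$ (rather than the sharper $0.01k+1$) suffices to absorb the $1/\sqrt{k}$ factor and produce $\sqrt{k}\cdot 2^{-k/64}$.
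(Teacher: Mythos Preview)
Your proposal is correct and matches the paper's intended argument: the corollary is stated without proof in the paper, but the implicit derivation is exactly the union bound over the at most $k$ values of $i$ with $i > (0.99)k$, each of which contributes at most $\frac{1}{\sqrt{k}}\cdot 2^{-k/64}$ by the preceding lemma. The same pattern is used for the earlier corollary following the {\bf SampBall} lemma.
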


\subsection{k-means++ covers bounded fraction of clusters}

In this section, we will prove that the k-means++ algorithm covers at most $\eta k$ clusters for some universal constant $\eta$. 
This in conjunction with Lemma~\ref{lemma-2} gives the main theorem. 

\begin{lemma}\label{lemma:cover}
Let $k \geq 10^3$ and $\eta \geq (0.999)$. Then we have:
\[
\pr[\textrm{k-means++ covers more than } \eta k \textrm{ clusters}] \leq (2\sqrt{k}) \cdot 2^{-k/300}.
\]
\end{lemma}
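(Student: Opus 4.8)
The plan is to deduce Lemma~\ref{lemma:cover} by coupling the k-means++ covering process with the procedure \textbf{BiasedSampBall} and invoking the corollary established for it; combined with Lemma~\ref{lemma-2} this yields the Main Theorem. Fix a sufficiently small constant $\beta_0$. The starting point is the bound $\pr[E_1]<2^{-2\beta_0 k}$ from the proof of Lemma~\ref{lemma-1}: except on an event $\mathcal B$ with $\pr[\mathcal B]<2^{-2\beta_0 k}$, the very first center chosen by the algorithm lies in a cluster $A_{j_0}$ with $j_0\le\beta_0 k$. (Lemma~\ref{lemma-1} itself, with $\alpha=\beta$ a small constant, would also serve here.) I would condition on $\overline{\mathcal B}$ and call $A_{j_0},\dots,A_k$ the \emph{tail} clusters, of which there are $k':=k-j_0+1\ge(1-\beta_0)k$.

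The core of the argument is the geometric balance already exploited in the proof of Lemma~\ref{lemma-2}: cluster $A_i$ contains $\Theta(m/2^{2i})$ points, while any point of $A_i$ is at squared distance $\Theta(2^{2i}r^2)$ from the nearest previously chosen center that lies in a cluster of index at most $i$. Since after the first iteration there is always such a center for every tail index $i$ (namely the one in $A_{j_0}$), \emph{every} still-uncovered tail cluster, and also the residual second end of \emph{every} already half-covered cluster, contributes $\Theta(mr^2)$ to the current k-means++ potential, and all these contributions agree up to a factor $\gamma\le 5$. Identifying the two ends of a cluster with the two balls of a colour, the sub-sequence of iterations whose chosen center falls in the tail then behaves exactly like \textbf{BiasedSampBall} on the $k'$ tail colours (a new colour at most $\gamma$ times as likely as an old one, each occupied end acting as a ball removed without replacement), while the remaining iterations are merely ``spent'' on the left clusters $A_1,\dots,A_{j_0-1}$ (or on their residual ends) and so can only \emph{decrease} the number of tail colours discovered.

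To finish, suppose k-means++ covers more than $\eta k=(0.999)k$ clusters. At most $j_0-1\le\beta_0 k$ of them are left clusters, so more than $(\eta-\beta_0)k$ tail colours are discovered; and at most $(1-\eta)k$ clusters are uncovered, so at most $k-k'=j_0-1$ of the (at most $k$) tail iterations go beyond the first $k'$ of them. By monotonicity of the colour count in the number of draws, \textbf{BiasedSampBall} on $k'$ colours therefore discovers more than $(\eta-2\beta_0)k$ colours within its first $k'$ draws; if $\beta_0$ is chosen small enough that $(\eta-2\beta_0)k\ge(0.99)k'$ — which holds for $\beta_0\le(\eta-0.99)/2$, since $k'\le k$ — the corollary bounds this probability by $\sqrt{k'}\cdot 2^{-k'/64}\le\sqrt k\cdot 2^{-k/300}$. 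Taking a union bound with $\pr[\mathcal B]$ and using $k\ge 10^3$ gives the stated bound $(2\sqrt k)\cdot 2^{-k/300}$.

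The main obstacle is making the coupling of the second paragraph fully rigorous. Establishing $\gamma\le 5$ requires bounding, in \emph{every} reachable configuration of chosen centers, the contribution of each cluster to the potential, both from above and from below (the residual end of a half-covered cluster must not be too cheap); and the stochastic domination must be set up so that the iterations absorbed by the expensive, out-of-model low-numbered clusters $A_1,\dots,A_{j_0-1}$ are correctly accounted for (they only steal draws from the tail, but this has to be argued carefully). Once the reduction to \textbf{BiasedSampBall} is in hand, matching up the constants $\beta_0,\eta$ and the exponents is routine.
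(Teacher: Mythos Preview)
Your proposal is correct and follows essentially the same route as the paper: condition on an early hit in a low-index cluster, then couple the remaining covering process with \textbf{BiasedSampBall} and invoke its corollary. The only difference is cosmetic: the paper applies Lemma~\ref{lemma-1} with $\alpha=\beta=1/10$ (waiting $k/10$ iterations rather than relying on the very first draw) and then runs the coupling on the last $9k/10$ iterations over the rightmost $9k/10$ clusters, so that draw and colour counts match directly and your overflow and left-cluster bookkeeping is unnecessary.
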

\begin{proof}
Let $E$ denote the event that after the first $k/10$ iterations, the first $k/10$ cluster are uncovered. 
Note that from Lemma~\ref{lemma-1}, we have:
\[
\pr[E] \leq e^{-k/300}
\]
Note that conditioned on the event $\neg E$, we have that in the remaining $k(1 - 1/10)$ iterations the probability of sampling a center from either end of any uncovered cluster from the $k(1 - 1/10)$ rightmost clusters is at most $5$ times the probability of sampling from the uncovered end of a covered cluster. This is because the potential of the points at the uncovered end of a covered cluster is $mr^2$ and the that of points in either end of an uncovered cluster is at most $5mr^2$ given that event $\neg E$ happens.
\footnote{You may see this by computing the potential of points at one end of the rightmost cluster with respect to the center $(0, r/2)$}

So,we can use the analysis of the previous section. 
We get that conditioned on the event $\neg E$, the probability that more than $(0.99)k(1 - 1/10)$ from the $k(1-1/10)$ rightmost clusters will be covered is at most $(\sqrt{k}) \cdot 2^{-(0.9) k/64}$. So we have:
\begin{eqnarray*}
\pr[\textrm{k-means++ covers more than } \eta k \textrm{ clusters}] &\leq& 
\pr[E] +\\
&& \pr[\textrm{k-means++ covers more than } \eta k \textrm{ clusters} | \neg E] \\
&\leq& e^{-k/300} + (\sqrt{k}) \cdot 2^{-k/128}\\
&\leq& (2\sqrt{k}) \cdot 2^{-k/300}
\end{eqnarray*}\qed
\end{proof}

Now the proof of the main theorem follows from the above lemma and Lemma~\ref{lemma-2}.

\begin{proof}[Proof of main theorem]
From Lemmas~\ref{lemma-2} and \ref{lemma:cover}, we get that:
\[
\pr[\textrm{k-means++ gives better than } (9-\eta)/8 \textrm{ approximation}] \leq (2\sqrt{k})\cdot 2^{-k/300}.
\]
\qed
\end{proof}

\section{The k-median Problem}
The k-median problem is similar to the k-means problem.
Here, the objective is to minimize the sum of Euclidean distances rather 
than the sum of squares of the Euclidean distances as in the k-means problem. That is, the objective function to be minimized is $\phi_C(X) = \sum_{x \in X} \min_{c \in C} D(x, c)$, where $D(x, c)$ denotes the Euclidean distance between points $x$ and $c$. 
For the k-median problem, we make appropriate changes to the seeding algorithm. More specifically, we consider the following algorithm:
\begin{quote}
{\bf (SmpAlg)}: Pick the first center randomly from among the given points. Pick a point to be the $i^{th}$ center ($i>1$) with probability proportional to the Euclidean distance of this point to the previously $(i-1)$ chosen centers.
\end{quote}
The counterexample showing that the above algorithm achieves a fixed constant approximation with probability only exponentially small in $k$ is similar to the example in figure~1. Instead of the $i^{th}$ optimal cluster ($i^{th}$ vertical bar) containing $m/2^{2i-2}$ points at either end, it contains $m/2^{i-1}$ points at either end. 
Given this, note that the cost of the optimal clustering is $kmr$. 
Note that the analysis of the previous section can be easily extended for this case.

\section{Conclusions and Open Problems}
In this work, we give a two dimensional example dataset on which the k-means++ seeding algorithm achieves a constant factor approximation (for some universal constant) with probability exponentially small in $k$.
This is only the first step towards understanding the behavior of k-means++ seeding algorithm on low-dimensional datasets. 
This addresses the open question of Brunsch and R\"{o}glin~\cite{br11}. 
Brunsch and R\"{o}glin gave a $O(k^2)$-dimensional instance where the k-means++ seeding algorithm achieves $O(\log{k})$ approximation factor with exponentially small probability and ask whether similar instances can be constructed in small dimension. 
An interesting open question is whether we can show that the seeding algorithm gives better than $O(\log{k})$ approximation factor on instances in small dimension.

\section{Acknowledgements}
Ragesh Jaiswal would like to thank Nitin Garg and Abhishek Gupta who were involved in the initial stages of this project.





\begin{thebibliography}{plain}

\bibitem[AB10]{ab10}
Marcel R. Ackermann and Johannes Bl\"{o}mer:
\newblock Bregman Clustering for Separable Instances.
\newblock {\em In Proceedings of the 12th Scandinavian Symposium and Workshop on Algorithm Theory (SWAT'10)}, Lecture Notes in Computer Science, vol. 6139, pp. 212-223, Springer, 2010. 

\bibitem[AJP13]{ajp13}
Manu Agarwal, Ragesh Jaiswal, and Arindam Pal.
\newblock k-means++ under approximation stability.
\newblock {\em In Proceedings of the 10th annual conference on Theory and Applications of Models of Computation (TAMC'13)}, Lecture Notes in Computer Science, vol. 7876, pp. 84--95, Springer, 2013.

\bibitem[ADK09]{adk09}
Ankit Aggarwal, Amit Deshpande, and Ravi Kannan. 
\newblock Adaptive sampling for k-means clustering. 
\newblock {\em In Proceedings of the 12th Intl. Workshop on Approximation Algorithms for Combinatorial Optimization Problems (APPROX'09)}, Lecture Notes in Computer Science, vol. 5687, pp. 15--28, Springer, 2009.

\bibitem[AJM09]{ajm09}
Nir Ailon, Ragesh Jaiswal, and Claire Monteleoni.
\newblock Streaming k-means approximation.
\newblock {\em In Advances in Neural Information Processing Systems (NIPS'09)}, pp. 10--18, 2009.

\bibitem[AV07]{av07}
David Arthur and Sergei Vassilvitskii.
\newblock k-means++: the advantages of careful seeding.
\newblock {\em In Proceedings of the 18th annual ACM-SIAM symposium on Discrete Algorithms (SODA'07)}, pp. 1027--1035, 2007.

\bibitem[AV06a]{av06a}
David Arthur and Sergei Vassilvitskii. 
\newblock How slow is the k-means method? 
\newblock {\em In Proceedings of the 22nd Annual Symposium on Computational Geometry (SOCG'06)}, pp. 144--153, 2006.

\bibitem[AV06b]{av06b}
David Arthur and Sergei Vassilvitskii. 
\newblock Worst-case and smoothed analysis of the ICP algorithm, with an application to the k-means method. 
\newblock {\em In Proceedings of the 47th Annual IEEE Symposium on Foundations of Computer Science (FOCS'06)}, pp. 153--164, 2006.

\bibitem[ABS10]{abs10}
P. Awasthi, A. Blum, and O. Sheffet.
\newblock Stability yields a PTAS for k-median and k-means clustering.
\newblock {\em In Proceedings of the 51st Annual IEEE Symposium on Foundations of Computer Science (FOCS'10)}, pp. 309--318, 2010.

\bibitem[BMVKV12]{b12}
Bahman Bahmani, Benjamin Moseley, Andrea Vattani, Ravi Kumar, and Sergei Vassilvitskii. 
\newblock Scalable k-means++. 
\newblock {\em In Proceedings of the VLDB Endowment}, Volume 5 Issue 7, March 2012, pp. 622-633, 2012.

\bibitem[BBG09]{bbg09}
Maria-Florina Balcan, Avrim Blum, and Anupam Gupta.
\newblock Approximate clustering without the approximation
\newblock {\em In Proceedings of the 20th Annual ACM-SIAM Symposium on Discrete Algorithms (SODA'09)}, pp. 1068--1077, 2009.

\bibitem[BR11]{br11}
Tobias Brunsch and Heiko R\"{o}glin.
\newblock A bad instance for k-means++.
\newblock {\em In Proceedings of the 8th annual conference on Theory and Applications of Models of Computation (TAMC'11)}, pp. 344--352, 2011.

\bibitem[D07]{d07}
Sanjoy Dasgupta.
\newblock The hardness of k-means clustering.
\newblock Technical Report CS2008-0916, University of California San Diego, 2007.

\bibitem[JG12]{jg12}
Ragesh Jaiswal and Nitin Garg.
\newblock Analysis of k-means++ for separable data.
\newblock {\em In Proceedings of the 16th International Workshop on Randomization and Computation (RANDOM'12)}, pp. 591--602, 2012.

\bibitem[JKS12]{jks12}
Ragesh Jaiswal, Amit Kumar, and Sandeep Sen.
\newblock A Simple $D^2$-sampling based PTAS for k-means and other Clustering Problems.
\newblock {\em In Proceedings of the 18th Annual International Conference on Computing and Combinatorics (COCOON'12)}, pp. 13--24, 2012.

\bibitem[MNV12]{mnv12}
Meena Mahajan, Prajakta Nimbhorkar, and Kasturi Varadarajan.
\newblock The planar k-means problem is NP-hard.
\newblock Theoretical Computer Science, Volume 442,  pp. 13--21, 13 July 2012.
http://dx.doi.org/10.1016/j.tcs.2010.05.034.

\bibitem[ORSS06]{ost06}
Rafail Ostrovsky, Yuval Rabani, Leonard J. Schulman, and Chaitanya Swamy.
\newblock The effectiveness of lloyd-type methods for the k-means problem. 
\newblock {\em In Proceedings of the 47th Annual IEEE Symposium on Foundations of Computer Science (FOCS'06)}, pp. 165--176, 2006.

\bibitem[V09]{v09}
Andrea Vattani. 
\newblock The hardness of k-means clustering in the plane. 
\newblock Manuscript, 2009.

\end{thebibliography}
\end{document}